\documentclass{llncs}
\usepackage{amssymb,latexsym}
\usepackage{amsmath}
\usepackage{amsfonts}
\usepackage{url}
\DeclareMathAlphabet{\mathpzc}{OT1}{pzc}{m}{it}

\newcommand{\comment}[1]{}
\renewcommand{\Pr}{\mathop{\bf Pr}\nolimits}
\newcommand{\E}{\mathop{\bf E}\nolimits}

\setlength{\textfloatsep}{0.1cm}

\pagestyle{plain}

\begin{document}

\title{More Analysis of Double Hashing \\ for Balanced Allocations}
\author{Michael Mitzenmacher\thanks{Supported in part by NSF grants CCF-1320231 and CNS-1228598.}}
\institute{Harvard University, School of Engineering and Applied Sciences\\
{\tt michaelm@eecs.harvard.edu}}
\date{}

\maketitle

\vspace{-0.3 in}

\begin{abstract}
With double hashing, for a key $x$, one generates two hash values
$f(x)$ and $g(x)$, and then uses combinations $(f(x) +i g(x)) \bmod n$
for $i=0,1,2,\ldots$ to generate multiple hash values in the range
$[0,n-1]$ from the initial two.  For balanced allocations,
keys are hashed into a hash table where each 
bucket can hold multiple keys, and each key is placed in the least
loaded of $d$ choices.  It has been shown previously that
asymptotically the performance of double hashing and fully random
hashing is the same in the balanced allocation paradigm using fluid
limit methods.  Here we extend a coupling argument used by Lueker and
Molodowitch to show that double hashing and ideal uniform hashing are
asymptotically equivalent in the setting of open address hash tables
to the balanced allocation setting, providing further insight into
this phenomenon.  We also discuss the potential for and bottlenecks
limiting the use this approach for other multiple choice hashing
schemes.
\end{abstract}

\vspace{-0.4 in}
\section{Introduction}
\vspace{-0.1 in}
\label{sec:introduction}

An interesting result from the hashing literature shows that, for open
addressing, double hashing has the same asymptotic performance as
uniform hashing.  We explain the result in more detail.  In open
addressing, we have a hash table with $n$ {\em cells} into which we
insert $m$ keys; we use $\alpha = m/n$ to refer to the load factor.
Each element is placed according to a probe sequence, which is a
permutation of the cells.  To place a key, we run through its probe
sequence in order, and place the key in the first empty cell found.
(Each cell can hold one key.)  The term uniform hashing is used to
refer to the idealized situation where the probe sequences are
independent, uniform permutations.  A key metric for such a scheme is
the search time for an unsuccessful search, which is the number of
probes until an empty cell is found.  When the load is $\alpha$, the
expected number of probes is easily shown to be $(n+1)/(n-\alpha n+1)
= (1-\alpha)^{-1} + O(1/n)$.

In contrast to uniform hashing, with double hashing, for a key $x$ one
generates two hash values $f(x)$ and $g(x)$, and then uses
combinations $(f(x) +i g(x)) \bmod n$ for $i=0,1,2,\ldots$ to generate
the permutation on $[0,n-1]$.  Here we assume that $n$ is prime, the
hash $f(x)$ is uniform over $[0,n-1]$, and $g(x)$ is uniform
$[1,n-1]$.  It might appear that limiting the space of random choices
with double hashing might significantly impact performance, but this
is not the case; it has been shown that the search time for an
unsuccessful search remains $(1-\alpha)^{-1} + o(1)$  \cite{bradford2007probabilistic,guibas1978analysis,lueker1993more}.

It is natural to ask whether similar results can be prove for other
standard hashing schemes.  For Bloom filters, Kirsch and Mitzenmacher
\cite{kirsch:bbb}, starting from the empirical analysis by Dillinger
and Manolios \cite{dillinger3312bfp}, prove that using double hashing
has asymptotically negligible effects on Bloom filter performance.
(Indeed, several publicly available implementations of Bloom filters
now use double hashing.)  Bachrach and Porat use double hashing in a
variant of min-wise independent sketches \cite{bachrach2010fast}.
Mitzenmacher and Thaler show suggestive preliminary results for double
hashing for peeling algorithms and cuckoo hashing \cite{MT}.  Leconte
consideres double hashing in the context of the load threshold for
cuckoo hashing, and shows that the thresholds are the same if one
allows double hashing to fail to place $o(n)$ keys \cite{Leconte}.
Recently, Mitzenmacher has shown that double hashing asymptotically
has no effect on the load distribution in the setting of balanced
allocations \cite{mitzenmacher2014balanced}; we describe this result
further in the related work below.

As a brief review, the standard balanced allocation paradigm works as follows: suppose
$m$ balls (the keys) are sequentially placed into $n$ bins (hash table buckets), where each ball is
placed in the least loaded of $d$ uniform independent choices of the
bins.  Typically we think of each of these $d$ choices as being obtained from a
random hash function; we therefore refer to this setting as using random hashing.
We use the standard balls and bins nomenclature for this setting (although one could
correspondingly use keys and buckets.)  
In the case where the number of balls and bins are equal,
that is $m = n$, the maximum load (that is, the maximum number of balls in
a bin) is $\frac{\log \log n}{\log d} + O(1)$,
much lower than the $\frac{\log n}{\log \log n} (1 +o(1))$ obtained
where each ball is placed according to a single uniform choice \cite{ABKU}.
Further, using a fluid limit model that yields a family of differential 
equations describing the balanced allocations process, one can determine,
for any constant $j$, the asymptotic fraction of bins of load $j$ as
$n$ goes to infinity, and Chernoff-type bounds hold that can bound the fraction
of bins of load $j$ for finite $n$ \cite{MitzThesis}.  
(These results extend naturally when $m = cn$ for a constant $c$;
the maximum load remains $\frac{\log \log n}{\log d} + O(1)$.)  

For balanced allocations in conjunction with double hashing, the $j$th
ball obtains two hash values, $f(j) \in [0,n-1]$ and $g(j) \in
[1,n-1]$, chosen uniformly from these ranges.  The $d$ choices for the
$j$th ball are then given by $h(j,k) = (f(j) + k g(j)) \bmod n$,
$k=0,1,\ldots,d-1$, and the ball is placed in the least loaded.  For
convenience in this paper we take $n$ to be prime, but the results can
be modified straighforwardly by having $g(j)$ chosen relatively prime
to $n$.  In particular, if $m$ is a power of 2, as is natural in practice, by
having $g(j)$ uniformly chosen from the odd numbers in $[1,n-1]$ we
obtain analogous results.  

The purpose of this paper is to provide an alternative proof that
double hashing has asymptotically negligible effects in the setting of
balanced allocations.  Specifically, we extend a coupling argument
used by Lueker and Molodowitch to show that double hashing and ideal
uniform hashing are asymptotically equivalent in the setting of open
address hash tables to the balanced allocation setting.  We refer to
their argument henceforth as the LM argument.  As far as we
are aware, this is the first time this coupling approach has been
used for a hashing scheme outside of open addressing.  Adapting the LM argument
gives new insights into double hashing for balanced allocations,
as well as to the potential for this approach
to be used for other multiple choice hashing schemes.  

In particular, our modification of the LM argument involves
significant changes.  For reasons we explain, the LM argument does not
seem to allow a direct coupling with random hashing; instead, we 
couple with an intermediary process, which is equivalent to random
hashing plus some small bias that makes bins with heavy load slightly
more likely.  We then argue that this added bias does not
affect the asymptotic performance of the balanced allocations process,
providing the desired connection between the balanced allocation
process with random hashing and double hashing.  Specifically, for
constant $d$, the maximum load remains $\frac{\log \log n}{\log d} +
O(1)$ with high probability, and the asymptotic fraction of bins with
constant load $j$ can be determined using the method of differential
equations.

\vspace{-0.2 in}
\subsection{Related Work}
\vspace{-0.1 in}

The balanced allocations paradigm, or the power of two choices, has
been the subject of a great deal of work.  See, for example, the
survey articles \cite{KSurvey,TwoSurvey} for references and
applications. 

The motivation for this paper stems from recent work showing that
the asymptotic fraction of bins of each load $j$ (for constant $j$)
for double hashing can be determined using the same differential
equations describing the behavior for random hashing
\cite{mitzenmacher2014balanced}.  
Using insight from this approach
also provides a proof that using double hashing, for a constant number
of choices $d$, the maximum load is $\log \log n / \log d + O(1)$ with
high probability using double hashing.  The latter result is obtained
by modifying the layered induction approach of \cite{ABKU} for random
hashing.  Here we provide an alternative way of obtaining these
results by a direct coupling with a slightly modified version of
random hashing, based on the LM argument.  The paper 
\cite{mitzenmacher2014balanced} also contains discussion of related work.    

Of course, our work is also highly motivated by the chain of work
\cite{bradford2007probabilistic,guibas1978analysis,lueker1993more,SSCH}
regarding the classical question of the behavior of double hashing for
open address hash tables, where empirical work had shown that the
difference in performance, in terms of the average length of an
unsuccessful search sequence, appeared negligible.  Theoretically, the
main result showed that for a table with $n$ cells and $\alpha n$ keys
for a constant $\alpha$, the number of probed locations in an
unsuccessful search was (up to lower order terms) $1/(1-\alpha)$  for
both double hashing and uniform hashing \cite{lueker1993more}.  We
have not seen this methodology applied to other hashing schemes such
as balanced allocations, although of course the issue of limited
randomness is pervasive; a recent example include studying the use of $k$-wise
independent hash functions for linear probing for small constant $k$ \cite{ppr,pat}.

\vspace{-0.2 in}
\section{Coupling Double Hashing and Random Hashing}
\vspace{-0.1 in}
\label{sec:background}

Before delving into our proof, it is worth describing the LM argument
at a high level, as well as changes needed in the balanced allocation context.

Consider the setting of open address hashing, where $m'$ keys have
been placed into a table of size $n$ using uniform hashing.  Suppose
now we consider placing the next key using double hashing instead of
random hashing.  The LM argument shows that we can couple the
decisions so that, with high probability (by which we mean $1-o(1)$),
the end result in terms of where the key lands is the same.
Inductively, this means that if we start from an empty table, we can
couple the two processes step by step, and as long as the coupling
holds, the two tables will appear exactly the same.

However, there is a problem.  Let us suppose that we run the processes
for $m = \alpha n$ keys.  While the two processes match up on any
single step with high probability, this probability is not high enough
(it is $\Omega(1/n)$) to guarantee that the two processes couple over all $m$ insertions of
balls with high probability.  At some point, the two processes will
very likely deviate, and we need to consider that deviation.

In fact, the LM argument enforces that the deviation occur in a
particular way.  They show that the probability a key ends in any
given position from double hashing is at most only $1+\delta$ times
the probability a key ends in any given position from uniform hashing
for a $\delta$ that is $o(1)$.  The coupling then places the key
according to double hashing with probability $1/(1+\delta)$ in both
tables, and with probability $\delta/(1+\delta)$ it places the key to
yield the appropriate distribution from uniform hashing.  As a result,
both tables follow the placement given by uniform hashing;
hence, in the rare case where coupling fails, it
fails in such a way that the double hashing process has obtained a key
placed according to uniform hashing.

When such a failure occurs, to the double hashing process, the key
appears as a randomly placed {\em extra} key that has entered the
system and that was not placed according to double hashing.  The LM argument then
makes uses of the following property: adding such an extra key only
makes things worse, in that at the end of the double hashing process
every hash cell occupied by a key if the extra key hadn't been added
will still be occupied.  This is a form of {\em domination} that the
LM argument requires.

As $\delta = o(1)$, the LM argument concludes by showing that if
we run the coupled process for $\alpha m + o(m)$ keys for a suitably
chosen $o(m)$, then at least $\alpha m$ keys will be added in the double
hashing process according to double hashing.  That is, 
the number of extra keys added is asymptotically negligible, giving the
desired result: double hashing is stochastically dominated by uniform
hashing with an asymptotically negligible number of extra keys, which does not affect the
high order $1/(1-\alpha)$ term for an unsuccessful search.  

We attempt to make an analogous argument in the double hashing setting
for balanced allocations.  A problem arises in that it seems we cannot
arrange for the coupling to satisfy the requirements of the original
LM argument.  As mentioned, in the open address setting, each position is only at
most $1+\delta$ times as likely to obtain a key (with high probability
over the results of the previous steps).  This fact is derived from Chernoff
bounds that hold because each cell has a reasonable chance of being
chosen; when there are $m'$ cells filled, each cell is the next filled
with probability approximately $1/(n-m')$.  But this need not be the case
in the balanced allocation setting.  As an example, consider the $d$th
most loaded bin; suppose for convenience it is the only bin with a
given load.  Using random hashing, the probability it receives a ball
is $O(n^{-d})$, as all $d$ choices have to be among the $d$ most
loaded bins.  Using double hashing, the probability it receives a ball
could be $\Omega(n^{-2})$, if the $d$ most loaded bins are in an arithmetic
progression that align with the double hashing.  While in this example
the probability the $d$ choices align this way is rare, in general the
probability that some bin is significantly more likely to obtain a
ball when using double hashing does not appear readily swept into
$o(1)$ failure probabilities.

However, intuitively, by concentration, this problem can only occur
for bins that are rarely chosen under uniform hashing; that is, for
bins with high load.  We therefore can resolve the issue by not
coupling double hashing with random hashing directly, but instead
slightly perturbing the distribution from random hashing to give
slightly more weight to heavily loaded bins, enough to cope with the
relative looseness in the concentration bounds for rare events.  We
then show that this small modification to random hashing does not
affect the characteristics of the final distribution of balls into
bins that we have described above.

\vspace{-0.2 in}
\section{Modified Random Hashing}
\label{sec:modrh}
\vspace{-0.1 in}

We start by defining the {\em modified random hashing} process that we couple with.
Given a balls and bins configuration, we do the following to place the next
ball:
\begin{itemize}
\item   with probability $n^{-0.4}$, we place the ball uniformly at random;
\item   with all remaining probability, we place the ball according to the least loaded of $d$ choices (with ties broken randomly).
\end{itemize}

We briefly note the following results regarding this modified random hashing.

\begin{lemma}
Let $i$, $d$, and $T$ be constants.  Suppose $m=Tn$ balls are
sequentially thrown into $n$ bins according to the modified random hashing process.  Let $X_i(T)$ be the number of
bins of load at least $i$ after the balls are thrown.  Let $x_i(t)$
be determined by the family of differential equations 
$$\frac{dx_i}{dt} = x_{i-1}^d - x_{i}^d,$$
where $x_0(t) = 1$ for all time and $x_i(0) = 0$ for $i \geq 1$.  
Then with probability $1-o(1)$, 
$$\frac{X_i(T)}{n} = x_i(T) + o(1).$$
\end{lemma}

\begin{lemma}
\label{lem:mrh2}
Let $d$ and $T$ be constants.  Suppose $m=Tn$ balls are
sequentially thrown into $m$ bins according to the modified random hashing process.  
Then the maximum load is $\frac{\log \log n}{\log d} + O(1)$, where the $O(1)$ term
depends on $T$.  
\end{lemma}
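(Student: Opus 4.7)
The strategy is to adapt the layered-induction argument of Azar, Broder, Karlin, and Upfal~\cite{ABKU} to tolerate the $n^{-0.4}$ uniform-placement probability. Let $X_i(t)$ denote the number of bins of load at least $i$ after $t$ balls have been placed. Conditioning on the history through step $t$, the probability that ball $t+1$ increases $X_i$ is at most
\[
\bigl(1 - n^{-0.4}\bigr)\!\left(\frac{X_{i-1}(t)}{n}\right)^{\!d} + n^{-0.4}\cdot\frac{X_{i-1}(t)}{n},
\]
since under the $d$-choice rule all $d$ probes must fall in bins of load at least $i-1$, while under the uniform rule only the single chosen bin must satisfy this.

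Following~\cite{ABKU}, I inductively construct thresholds $\beta_{i_0},\beta_{i_0+1},\ldots$ and show that $X_i(m)/n \leq \beta_i$ holds with probability $1-o(1/\log n)$ throughout the process. The base value $\beta_{i_0}$ comes from Lemma~1: the fluid-limit analysis gives, for any desired small constant $c$, a constant $i_0$ with $X_{i_0}(m)/n \leq c$ with high probability. A one-step multiplicative Chernoff bound applied to the $Tn$ indicator increments (truncated on the prior-layer good event) then yields the recurrence
\[
\beta_{i+1} \;\leq\; 2T\,\beta_i^d \;+\; 2T\,n^{-0.4}\,\beta_i \;+\; O\!\left(\frac{\log n}{n}\right).
\]
While $\beta_i \geq n^{-0.4/(d-1)}$, the first term dominates and the recurrence reduces to $\beta_{i+1}\leq 3T\beta_i^d$, giving the usual doubly-exponential decay; this threshold is crossed after $\log\log n / \log d + O(1)$ layers past $i_0$. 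Once $\beta_i$ drops below the threshold, the uniform term takes over and $\beta_{i+1}\leq 3Tn^{-0.4}\beta_i$, so $\beta_i$ shrinks by a factor $n^{-0.4}$ per layer; after $O(1)$ further layers, the expected value of $X_i(m)$ is $o(1)$, forcing $X_i(m)=0$ with high probability.

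The main obstacle will be maintaining Chernoff-type concentration layer by layer once the expected per-layer increments become polynomially small; this is handled by conditioning on the prior-layer favorable event and union-bounding over the $O(\log\log n)$ relevant layers, which keeps the total failure probability $o(1)$. The handoff between the $\beta_i^d$-dominated and the $n^{-0.4}\beta_i$-dominated regimes introduces only an additive $O(1)$ in the number of layers, and hence in the maximum load, yielding the claimed bound $\log\log n/\log d + O(1)$ with the hidden constant depending on $T$ and $d$.
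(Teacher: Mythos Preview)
Your proposal is correct and follows essentially the same route as the paper: adapt the ABKU layered induction, observe that the per-step probability of increasing $X_i$ is at most $(X_{i-1}/n)^d + n^{-0.4}(X_{i-1}/n)$, run the standard doubly-exponential decay until the uniform term overtakes the $d$-choice term, and then argue that only $O(1)$ further layers are needed. The paper's endgame is slightly more direct than yours --- once the induction stalls at $i^*$ with $\beta_{i^*-1}^{d-1}/n^{d-1}\le n^{-0.4}$, it simply union-bounds the probability that any bin of load $i^*$ receives three more balls by $\binom{n}{3}\,n\,(2n^{-1.4})^3=O(n^{-0.2})$ --- whereas you continue the recursion through a geometric-decay regime $\beta_{i+1}\le 3Tn^{-0.4}\beta_i$; both finish in $O(1)$ additional layers and yield the same bound.
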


Both proofs follow readily from the known proofs of these statements under random hashing,
with small changes to account for the modification.  Intuitively, only 
$mn^{-0.4} = Tn^{0.6}$ balls are distributed randomly, which with high
probability affects $o(n)$ bins by at most $O(1)$ amounts.  Hence, one would not
expect the modification to the random hashing process to change the load distribution
substantially.  More details are given in the Appendix.   

\vspace{-0.2 in}
\section{The Coupling Proof}
\vspace{-0.1 in}

We now formalize the coupling proof.  While 
we generally follow the description of Lueker and Molodowitch, our different setting naturally requires
changes and some different terminology.  

Recall that we assume that there is a table of $n$ bins, where $n$ is
a prime.  We may use the term {\em hash pair} to refer to one of the
$n(n-1)$ possible pairs of hash values $(f(j),g(j))$.  We aim to
consider the outcomes when $m = cn$ balls are placed using double
hashing.  We refer to the bin state as the ordered list
$(b_1,b_2,\ldots,b_n)$, where $b_i$ is the number of balls in the
$i$th bin.  For any bin state, for every bin $z$, let $\hat{\eta}(z)$
be the number of hash pairs that would cause $z$ to obtain the next
ball. It follows that the probability ${\eta}(z)$ that $z$ is the next
bin to obtain a ball is $\hat{\eta}(z)/(n(n-1))$.

When considering modified random hashing with $d$ choices, we assume the $d$
choices are made without replacement.  This choice does not matter, as
it is known the difference in performance between choosing with and without replacement
is negligible.  Specifically, for constant $d$, the expected number of balls that would choose
some bin more than once is constant, and is $O(\log n)$ with high probability;  
this does not affect the asymptotic behavior of the system.  In our setting,
since with double hashing the choices are without replacement, it makes the argument details
somewhat easier.

Similarly, we technically need to consider what to do if there is a
tie for the least loaded bin.  
For convenience, in case of a tie we assume that the tie is broken randomly
among the bins that share the least load, but in the following coupled fashion.
At each step, we assume a ranking is given to the bins (according to a random
permutation of $[1,n]$);  the rankings at each step are independent and uniform.
In case of tie in the load, the rank is used the break the tie.  Note that, 
at each step, we then have a total ordering on the bins, where the order is
determined first by the load and then by the rank.  We refer to the $j$th ordered bin,
with the following meaning;  the first ordered bin is the heaviest loaded with lowest priority in tie-breaking,
and the $n$th ordered bin is the least loaded with the highest priority in tie-breaking.
Hence, with random hashing, the $j$th ordered bin obtains the next ball with probability
$$\frac{d}{n} \frac{{{j-1} \choose {d-1}}}{{{n-1} \choose {d-1}}}.$$
The $d/n$ term represents that the $j$th ordered bin must be one of the $d$ choices;  the other term
represents that the remaiining $d-1$ choices must be from the top $j-1$ ordered elements.

We extend the domination concept used in the LM argument in the natural way.
We say a bin state $B = (b_1,b_2,\ldots,b_n)$ dominates a bin state $A
= (a_1,a_2,\ldots,a_n)$ if $b_i \geq a_i$ for all $i$.  We may write
$B \succeq A$ when $B$ dominates $A$.  The following is the key point regarding
domination:
\begin{lemma}
\label{lem:dom}
If $B \succeq A$, and we insert a ball into a table $B$ to
obtain $B'$ and the same ball into $A$ to obtain $A'$ by using the least
loaded of $d$ choices, then (whether we use double hashing, 
random hashing, or modified random hashing) $B' \succeq A'$.  
\end{lemma}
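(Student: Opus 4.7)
The plan is to couple the two insertions to use identical randomness: the same tuple of $d$ bin choices (derived from the same hash pair under double hashing, or the same uniform draw under random or modified random hashing), and --- crucially --- the same rank permutation used for tie-breaking. Since the $d$ choices presented to the two tables are identical and the placement rule (least loaded, ties broken by rank) is deterministic given those choices and the permutation, the proof will reduce to a case analysis on where the two tables place the ball.

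Let $p_A$ and $p_B$ denote the bins receiving the ball in $A$ and $B$, respectively. Outside $\{p_A, p_B\}$ the loads are unchanged and domination is inherited from $B \succeq A$. If $p_A = p_B$ (including the uniform-placement branch of modified random hashing, if one wants to include it, since the same random bin is used in both tables), the same bin is incremented in both and domination is preserved automatically. Domination at $p_B$ in the case $p_A \neq p_B$ is also automatic since $b'_{p_B} = b_{p_B} + 1 \geq a_{p_B} = a'_{p_B}$. The only nontrivial requirement is strict domination at $p_A$ in the case $p_A \neq p_B$: I need $b_{p_A} > a_{p_A}$, so that $b'_{p_A} = b_{p_A} \geq a_{p_A} + 1 = a'_{p_A}$.

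To establish strict domination at $p_A$, I would argue by contradiction: assume $b_{p_A} = a_{p_A}$. The selection rules give $a_{p_A} \leq a_{p_B}$ with equality only if the rank permutation ranks $p_A$ above $p_B$ (this is why $p_A$ was chosen in $A$), and $b_{p_B} \leq b_{p_A}$ with equality only if the rank permutation ranks $p_B$ above $p_A$ (this is why $p_B$ was chosen in $B$). Combining the assumption $b_{p_A} = a_{p_A}$ with the domination hypothesis $b_{p_B} \geq a_{p_B}$ yields
\[
b_{p_B} \;\leq\; b_{p_A} \;=\; a_{p_A} \;\leq\; a_{p_B} \;\leq\; b_{p_B},
\]
so all inequalities are equalities. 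In particular both tie-breakings are invoked, but their outcomes disagree on the relative rank of $p_A$ and $p_B$, contradicting the use of a single shared rank permutation.

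The hard part is exactly this last step: the tie-breaking rule must be arranged consistently across the coupled insertions, because otherwise a tie in $A$ resolved one way and a tie in $B$ resolved the other way could silently destroy domination. The preceding paragraph of the paper already arranges this by sharing a single random rank permutation between the two tables; with that setup in hand, the remainder of the argument is pure bookkeeping, and the hashing scheme (double, random, or modified random) enters only through the fact that the $d$ choices are common to both tables.
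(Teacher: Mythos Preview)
Your proposal is correct and follows essentially the same approach as the paper's proof: both couple on the same $d$ choices and shared tie-breaking rank, then argue that if the two tables place the ball in different bins, the bin chosen by $A$ must satisfy a strict inequality $b_{p_A} > a_{p_A}$. The paper's proof simply asserts this last implication in one line (``it must be because $b_{i_1} > a_{i_1}$''), whereas you spell out the contradiction via the chain $b_{p_B} \leq b_{p_A} = a_{p_A} \leq a_{p_B} \leq b_{p_B}$ and the resulting clash in tie-breaking ranks; your version is more explicit but not substantively different.
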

\begin{proof}
Suppose the bin choices are $i_1,i_2,\ldots,i_d$.  Without loss of
generality let $i_1$ be the least loaded of these choices in bin state $A$ (or the bin
chosen by our tie-breaking scheme). If $i_1$ is not chosen as the least loaded in $B$,
it must be because $b_{i_1} > a_{i_1}$, and hence even after the ball is placed,
$B' \succeq A'$.  {\hspace*{\fill}\rule{6pt}{6pt}\bigskip}
\end{proof}
Our goal now is to show that if we have a table which has been filled
up to that point by modified random hashing, we can couple
appropriately.  That is, we can couple by using the result of a double hashing step 
with high probability, and with some small probability we use a modified random hashing
step, giving the double hashing process an {\em extra} ball.  

To begin, we note that with modified random hashing, the $j$th ordered bin
obtains a ball with probability 
$$ p_j  = (1-n^{-0.4})\left( \frac{d}{n} \frac{{{j-1}\choose {d-1}}}{{{n-1} \choose {d-1}}}\right ) + n^{-0.4}\frac{1}{n} 
    = (1-n^{-0.4})\left( \frac{d}{n} \frac{{{j-1}\choose {d-1}}}{{{n-1} \choose {d-1}}}\right ) + n^{-1.4}
$$
In the right hand side of the first equality, the first term expresses the probability that there are $d$ choices and that the ball chooses the $j$th order bin.  The second term arises from the probability that the ball is placed randomly after choosing a single bin.

We wish to show the following:
\begin{lemma}
\label{lem:coupling}
Suppose a bin $z$ is $j$th in the ordering after
starting with an empty table and adding $n'$ balls by
modified random hashing.  Then 
$${\eta}(z) \leq p_j \left (1 + n^{-0.01} \right )$$
except with probability $ne^{-0.05n}$, where this probability is over the random
bin state obtained from the $n'$ placed balls.
\end{lemma}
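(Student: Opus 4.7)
The plan is to express $\hat{\eta}(z) := n(n-1)\eta(z)$ as an arithmetic-progression indicator sum and compare it against the idealized value under a uniform $S_z$. A double-hashing pair $(f,g)$ places the ball at $z$ exactly when $z = f + kg \bmod n$ for some $k\in\{0,\ldots,d-1\}$ with the other $d-1$ AP elements all in $S_z$, so
\[
\hat{\eta}(z) \;=\; \sum_{g=1}^{n-1}\sum_{k=0}^{d-1}\mathbb{1}\bigl[T_{g,k}(z) \subseteq S_z\bigr], \qquad T_{g,k}(z) := \{z + mg \bmod n : m \in \{-k,\ldots,d-1-k\}\setminus\{0\}\}.
\]
Since $n$ is prime and $g \ne 0$, each $T_{g,k}(z)$ is a $(d-1)$-element subset of $[n]\setminus\{z\}$. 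If $S_z$ were a uniformly random $(j-1)$-subset of $[n]\setminus\{z\}$, each indicator would have probability $\binom{j-1}{d-1}/\binom{n-1}{d-1}$, giving the idealized expectation $E_j := d(n-1)\binom{j-1}{d-1}/\binom{n-1}{d-1}$, which is exactly $n(n-1)$ times the first (random-hashing) term of $p_j$. Using $n(n-1)p_j = (1-n^{-0.4})E_j + (n-1)n^{-0.4}$, the lemma reduces to showing, with the stated probability,
\[
\hat{\eta}(z) \;\leq\; (1+n^{-0.02})\,E_j + \tfrac{1}{2}(n-1)n^{-0.4}.
\]

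Two regimes are straightforward. The case $j \leq d-1$ is immediate: $|S_z| = j-1 < d-1 = |T_{g,k}(z)|$ forces $\hat{\eta}(z)=0$, and $p_j \geq n^{-1.4}>0$. The ``heavy-bin'' regime, where $\binom{j-1}{d-1} = o(n^{0.6})$, is handled deterministically: any length-$d$ AP contained in $S_z \cup \{z\}$ contributes at most two pairs $(f,g)$ to the sum (via the two orderings $\pm g$ of the AP), and the number of such APs through $z$ is bounded above by the number of $(d-1)$-subsets of $S_z$, giving $\hat{\eta}(z) \leq 2\binom{j-1}{d-1}$, which the additive buffer $\tfrac12(n-1)n^{-0.4}$ absorbs. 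By Lemma~\ref{lem:mrh2} the maximum load is $O(\log\log n)$, so every genuinely heavy bin has polylogarithmic rank $j$ and falls in this regime---precisely the bins ``for which concentration is loose under uniform hashing'' flagged in Section~\ref{sec:background}.

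The main obstacle is the intermediate-to-large-$j$ regime, where one must establish $\hat{\eta}(z) \leq (1+n^{-0.02})E_j$ except with probability $e^{-0.05n}$. My plan is to condition on the load profile $(b_1,\ldots,b_n)$ after $n'$ balls and write $S_z = T \cup E'$ with $T = \{y:b_y>b_z\}$ and $E' \subseteq E := \{y\ne z : b_y = b_z\}$ determined by the independent fresh ranking for the next step, and then argue concentration of $\hat{\eta}(z)$ via a Doob-martingale exposure over the $n'$ placements. The sensitivity of $\hat{\eta}(z)$ to a ball placed in a bin $y \ne z$ is $O(d^2)$, since any length-$d$ AP through $z$ and $y$ is specified by the positions of the two bins in the AP; the rare high-sensitivity event of a ball landing in $z$ itself happens only $O(1)$ times in expectation (by Lemma~\ref{lem:mrh2}) and its contribution is absorbed into the additive buffer. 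The $n^{-0.4}$ uniform-noise component of modified random hashing is essential here: it injects enough uniformly-random placements into the bin state to rule out pathological arithmetic-progression structure in $T \cup E$, which is needed both to control the martingale variance and to prevent the second moment of the AP-indicator sum from being dominated by clustered overlaps, while simultaneously providing via the $n^{-1.4}$ term of $p_j$ the additive slack $(n-1)n^{-0.4}$ that catches rare-event contributions. A final union bound gives the prefactor $n$ in the failure probability $ne^{-0.05n}$.
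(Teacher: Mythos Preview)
Your setup of $\hat{\eta}(z)$ as an AP-indicator sum and the target $E_j$ are correct, but you are missing the one observation that makes the lemma almost immediate: under modified random hashing (indeed under \emph{any} bin-exchangeable placement rule together with the fresh uniform tie-breaking ranking), the rank ordering of the $n$ bins is a uniformly random permutation. Consequently, conditioned on $z$ having rank $j$, the set $S_z$ \emph{is} a uniformly random $(j-1)$-subset of $[n]\setminus\{z\}$; your ``if $S_z$ were a uniformly random $(j-1)$-subset'' is not a hypothesis to be worked around but the actual distribution. The paper exploits this directly: it runs the Doob martingale not over the ball placements but over the ranks $Z_1,\ldots,Z_{n-1}$ of the other bins. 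Swapping the ranks of two bins $a,b$ can only affect hash pairs through $z$ and $a$ or through $z$ and $b$, of which there are $O(d^2)$, so Azuma--Hoeffding over $n-1$ steps gives $|\hat{\eta}(z)-E_j|\le n^{0.53}$ except with probability $e^{-n^{\Omega(1)}}$, uniformly in $j$. The two-case split ($E_j\ge n^{0.55}$ versus $E_j< n^{0.55}$) then finishes exactly along the lines of your reduction.

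Your proposed martingale over the $n'$ ball placements, by contrast, has a real obstruction: changing the random choices at step $t$ can change where ball $t$ lands, which alters the loads seen by ball $t{+}1$, which can change where \emph{it} lands, and so on. The effect of a single coordinate on the final load profile, and hence on $S_z$ and $\hat{\eta}(z)$, is not $O(d^2)$ in any obvious sense; balanced allocation does not enjoy a clean bounded-difference property over its input randomness. (Your plan to ``condition on the load profile'' and then expose the placements is also circular, since the placements determine the profile.) Finally, you misread the role of the $n^{-0.4}$ noise: it is \emph{not} used to inject randomness for concentration or to destroy AP structure in $S_z$ --- the exchangeability already makes $S_z$ perfectly uniform --- but solely to guarantee the additive floor $p_j\ge n^{-1.4}$, which absorbs the $n^{0.53}$ Azuma error in the small-$E_j$ regime. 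Your deterministic heavy-bin bound $\hat{\eta}(z)\le 2\binom{j-1}{d-1}$ is correct but unnecessary once the uniform Azuma bound is in hand.
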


We remark that the constants here were chosen for convenience and not optimized;  
this is sufficient for our asymptotic statements.  

\begin{proof}
If $z$ is $j$th in the ordering, we have
$$\E[{\eta}(z)] = \frac{d}{n} \frac{{j-1 \choose d-1}}{{n-1 \choose d-1}} \mbox{    ;    }
\E[\hat{\eta}(z)] = d(n-1) \frac{{j-1 \choose d-1}}{{n-1 \choose d-1}}.$$
Here we use the fact that, under modified random hashing, the ordering of the bins form a uniform permutation.    
Hence, in expectation, double hashing yields the same probability for a bin obtaining a ball as random hashing.
However, we must still show that individual probabilities are close to their expectations.

We first show that when $\E[{\eta}(z)]$ is sufficiently large 
then ${\eta}(z)$ is close to its expectation, which is unsurprising.
When $\E[{\eta}(z)]$ is small, so that tail bounds are weaker, 
we are rescued by our modification to $p_j$;  the 
additional $n^{-1.4}$ skew in the
distribution that we have added for modified random hashing will
our desired bound between ${\eta}(z)$ and $p_j$.  

In this case, we use martingale bounds;  we use martingales
instead of Chernoff bounds because there is dependence among the behavior of the $d(n-1)$
hash pairs that include bin $z$.  

We set up the martingale as follows.  We refer to the bins as bins 1 to $n$.  Without loss of generality let $z$ be the last bin (labeled 
$n$) and 
and let $Z_i$ be the rank in the bin ordering of the $i$th bin, for $i=1$ to $n-1$.  We expose the $Z_i$ one
at a time to establish a Doob martingale \cite[Section 12.1]{MU}.  Let 
$$Y_i = \E[\hat{\eta}(z)~|~Z_1,\ldots,Z_i].$$
Note $Y_0 = \E[\hat{\eta}(z)]$ and $Y_{n-1} = \hat{\eta}(z)$.  
We claim that 
$$|Y_i - Y_{i-1}| \leq d^2.$$
To see this, note that changing our permutation of the ordering of the bins by switching
the rank order of two bins $a$ and $b$ can only affect the hash pairs that include 
$z$ and $a$ or $z$ and $b$;  there are fewer than $d^2$ such sequences, since there are
${d \choose 2}$ hash pairs than include any pair of bins (determined by which of the $d$
hashes each of the two bins corresponds to).  

Hence we can apply the standard Azuma-Hoeffding inequality (see, e.g., \cite[Theorem 12.4]{MU})
to obtain
$$\Pr(|Y_{n-1} - Y_0| \geq \lambda) \leq 2e^{-\lambda^2/(2(n-1)d^2)}.$$
Hence 
$$\Pr(|\hat{\eta}(z) - \E[\hat{\eta}(z)]| \geq \lambda) \leq 2e^{-\lambda^2/(2(n-1)d^2)}.$$

We now break things into cases. First, suppose $z$ and $j$ are such that $\E[\hat{\eta}(z)]
\geq n^{0.55}$. We choose $\lambda = n^{0.53}$ to obtain
$$\Pr(|\hat{\eta}(z) - \E[\hat{\eta}(z)]| \geq n^{0.53}) \leq 2e^{-n^{1.06}/(2(n-1)d^2)} \leq e^{-n^{0.05}}.$$
for sufficiently large $n$.  
Hence 
$$\Pr(|{\eta}(z) - \E[{\eta}(z)]| \geq n^{-0.47}/(n-1)) \leq e^{-n^{0.05}}.$$
We also note that in this case
$$p_j  \geq {(1-n^{-0.4})}{\E[{\eta}(z)]},$$
so 
$$p_j {(1+2n^{-0.4})} \geq {\E[{\eta}(z)]}$$
for large enough $n$.  
It follows that 
$$\Pr({\eta}(z) - (1+2n^{-0.4})p_j \geq n^{-0.47}/(n-1)) \leq e^{-n^{0.05}}.$$
Further, $p_j \geq n^{0.55}(1-n^{-0.4})/n(n-1)$.  
Simpliyfing the above we find
$$\Pr({\eta}(z) - p_j \geq 2n^{-0.4}p_j + n^{-0.47}/(n-1)) \leq e^{-n^{0.05}},$$
which implies 
$$\Pr({\eta}(z) - p_j \geq n^{-0.01}p_j) \leq e^{-n^{0.05}}.$$
Hence ${\eta}(z) \leq (1+n^{-0.01})p_j$ with very high probability over the bin state.  

Now, consider when $z$ and $j$ are such that $\E[\hat{\eta}(z)]
\leq n^{0.55}$. We again choose $\lambda = n^{0.53}$ to obtain
$$\Pr(|{\eta}(z) - \E[{\eta}(z)]| \geq n^{-0.47}/(n-1)) \leq e^{-n^{0.05}}.$$
In this case, $p_j \geq n^{-1.4}$ and hence greater than $\E[{\eta}(z)]$ for sufficiently large $n$.
Hence 
$$\Pr({\eta}(z) - p_j \geq n^{-0.47}/(n-1)) \leq e^{-n^{0.05}},$$
and therfore
$$\Pr({\eta}(z) - p_j \geq n^{-0.05}p_j) \leq e^{-n^{0.05}}.$$

In both cases, we have ${\eta}(z) \leq (1+n^{-0.01})p_j$ with probability at most $e^{-n^{0.05}}$;
a union bound gives the result.  {\hspace*{\fill}\rule{6pt}{6pt}\bigskip}
\end{proof}

From this point, we can return to following the LM argument.  We have shown that the probability
a bin is chosen using double hashing is at most $(1+\delta)$ times that of modified random hashing
for $\delta = n^{-0.01}$, with high probability.
We therefore consider the following algorithm to bound the performance of throwing $m = cn$ balls
into $n$ bins using double hashing.  In what follows, we discuss a bin $z$ that we consistently
make $j$th in the ordering, so with modified random hashing the probability a ball lands in $z$
is $p_j$, and with double hashing this probability is $\eta(z)$.   
\begin{enumerate}
\item We throw $(1+2\delta)m$ balls.  
\item If, at any step, we have $\eta(z) > (1+\delta)p_j$ for any bin $z$, the algorithm fails and we stop.
Otherwise, we place balls as follows.  
\item At each step, with probability $1/(1+\delta)$, we place a ball according to double
hashing.
\item Otherwise, with probability $\delta/(1+\delta)$, we place a ball with probability $\delta^{-1} \left (  (1+\delta)p_j - \eta(z) \right)$ into bin $z$.  
\end{enumerate}

\begin{theorem}
With high probability, the algorithm above places $(1+2\delta)m$ balls according to modified random hashing,  
and at least $m$ balls are placed according to double hashing.  The final bin state therefore dominates that 
of placing $m$ balls using double hashing with high probability.  
\end{theorem}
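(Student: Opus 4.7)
The plan is to verify three claims in sequence and then conclude by domination. First I would check that the placement distribution at each step of the coupled algorithm matches modified random hashing exactly: bin $z$ (currently $j$th in the ordering) receives the ball with probability
\[
\frac{\eta(z)}{1+\delta} + \frac{\delta}{1+\delta}\cdot \delta^{-1}\bigl((1+\delta)p_j - \eta(z)\bigr) = p_j,
\]
and the two branches give valid distributions (the ``modified'' branch weights sum to $1$ and are nonnegative precisely because the non-abort condition $\eta(z) \leq (1+\delta)p_j$ holds). Hence, up to any step the algorithm has not aborted by, its bin state has the same law as modified random hashing after the same number of placements.

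Second, I would bound the abort probability. Since the bin state evolves as modified random hashing up until abort, Lemma \ref{lem:coupling} applies at each step and gives that $\eta(z) > (1+\delta)p_j$ for some $z$ with probability at most $ne^{-0.05n}$. A union bound over the $(1+2\delta)m = O(n)$ steps bounds the total abort probability by $O(n^2 e^{-0.05n}) = o(1)$. Third, I would count the double-hashing placements: independently of everything else, at each step the algorithm's branch is a Bernoulli$(1/(1+\delta))$ trial, so the number $N$ of double-hashing placements over $(1+2\delta)m$ steps is binomial with mean $(1+2\delta)m/(1+\delta) \geq m(1+\delta/2)$. A standard Chernoff bound gives $\Pr(N < m) \leq e^{-\Omega(m\delta^2)} = e^{-\Omega(n^{0.98})} = o(1)$ since $\delta = n^{-0.01}$ and $m = cn$.

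To conclude the domination statement, I would couple with a pure $m$-ball double-hashing process by assigning its $k$th ball the same hash pair that the coupled algorithm uses on its $k$th double-hashing step, for $k=1,\ldots,m$. On the high-probability event that no abort occurs and $N \geq m$, the coupled bin state is then the pure double-hashing state plus the extra balls arising from the modified branch, inserted in between. Iterating Lemma \ref{lem:dom} — one application per ball in the coupled algorithm's insertion order — shows that the extras only raise loads, so the final coupled state dominates the pure double-hashing state.

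The one real subtlety, which I view as the main obstacle, lies in step two: the abort event must be expressible as a deterministic function of the bin state so that Lemma \ref{lem:coupling}'s unconditional bound (which is stated over a modified-random-hashing bin state) translates directly to a per-step failure bound under the coupled dynamics, without circular conditioning on non-abort. Everything else reduces to the marginal-probability calculation above, a Chernoff bound, and iterated application of Lemma \ref{lem:dom}.
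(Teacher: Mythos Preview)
Your proposal is correct and follows essentially the same approach as the paper: verify the per-step distribution equals $p_j$, union-bound the abort probability via Lemma~\ref{lem:coupling}, Chernoff-bound the count of double-hashing steps, and invoke Lemma~\ref{lem:dom} for domination. If anything, you are slightly more careful than the paper --- your Chernoff exponent $e^{-\Omega(n^{0.98})}$ is the right order (the paper's stated $e^{-0.97n}$ appears to be a slip), and your remarks on the conditioning subtlety and the explicit coupling with a pure double-hashing run spell out points the paper leaves implicit.
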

\begin{proof}
A simple calculation shows that each ball lands in the $j$th ordered bin with probability 
$$ \frac{1}{1+\delta} \eta(z) + \frac{\delta}{1+\delta} \left ( \frac{(1+\delta)p_j}{\delta } - \frac{\eta(z)}{\delta} \right) = p_j.$$
So each ball is placed with the same distribution as for
 modified random hashing, as long as no bin has $\eta(z) > (1+\delta)p_j$.  
By Lemma~\ref{lem:coupling}, the probability of such a failure is union bounded by $mne^{-0.05n}$ over the $m$ steps
of adding balls.  

Let $B$ be the number of balls placed by double hashing when using the above algorithm. Then
$$\E[B] = (1+2\delta)m/(1+\delta) > (1+\delta/2)m.$$
A simple Chernoff bound \cite[Exercise 4.13]{MU} gives
$$\Pr(B \leq m) \leq e^{-2m(\delta/2)^2/(1+2\delta)} \leq e^{-0.97n}$$
for $n$ sufficiently large and $m = cn$ for a constant $c$.  

The extra balls placed by modified random hashing are handled via our domination result, Lemma~\ref{lem:dom}.
{\hspace*{\fill}\rule{6pt}{6pt}\bigskip}
\end{proof}

The following corollary is immediate from the domination.
\begin{corollary}
\label{cor:mrh2}
Let $d$ and $T$ be constants.  Suppose $m=Tn$ balls are
sequentially thrown into $m$ bins according to double hashing.
Then the maximum load is $\frac{\log \log n}{\log d} + O(1)$, where the $O(1)$ term
depends on $T$.  
\end{corollary}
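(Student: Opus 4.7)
The plan is essentially to compose the coupling theorem stated immediately before the corollary with Lemma~\ref{lem:mrh2}. The coupling theorem constructs, for constant load $c = T$ and $\delta = n^{-0.01}$, a joint process in which $(1+2\delta)m$ balls are placed according to modified random hashing, at least $m$ of those balls coincide with the outputs of a double-hashing process on $m = Tn$ balls, and the final modified-random-hashing bin state dominates the final double-hashing bin state coordinatewise, all with probability $1 - o(1)$. I would apply this coupling directly.

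Next, I would invoke Lemma~\ref{lem:mrh2} on the dominating side. Writing $T' = (1+2n^{-0.01})T$, the modified random hashing side receives $T'n$ balls; since $T' \leq 2T$ for all sufficiently large $n$, $T'$ is bounded by a constant depending only on $T$. Lemma~\ref{lem:mrh2} therefore gives, with probability $1 - o(1)$, a maximum load of $\frac{\log \log n}{\log d} + O(1)$, where the $O(1)$ term depends on $T'$ and hence on $T$. Because $B \succeq A$ implies that every coordinate of $A$ is at most the corresponding coordinate of $B$, and in particular $\max_i a_i \leq \max_i b_i$, the maximum load under double hashing is bounded above by the maximum load under modified random hashing on the same sample path.

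A union bound over the two $1-o(1)$ events, namely the success of the coupling and the concentration of the maximum load for modified random hashing, then yields that with high probability the double-hashing process on $Tn$ balls has maximum load $\frac{\log \log n}{\log d} + O(1)$, with the $O(1)$ depending on $T$.

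There is no substantive obstacle; the real work has already been done in the coupling theorem and in Lemma~\ref{lem:mrh2}. The only item worth verifying explicitly is that inflating the ball count by the factor $(1+2n^{-0.01})$ does not blow up the $O(1)$ term, which is immediate because this factor is bounded above by $2$, so the effective load constant on the dominating side is at most $2T$.
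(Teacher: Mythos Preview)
Your proposal is correct and matches the paper's approach exactly: the paper simply states that the corollary ``is immediate from the domination,'' and you have spelled out precisely those details---invoke the coupling theorem so that modified random hashing with $(1+2\delta)m$ balls dominates double hashing with $m$ balls, apply Lemma~\ref{lem:mrh2} on the dominating side (with effective load bounded by $2T$), and transfer the maximum-load bound via $B \succeq A \Rightarrow \max_i b_i \geq \max_i a_i$.
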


This next corollary follows from the fact that the algorithm shows that, step by step,
the double hashing process and the modified hashing process are governed by the same
family of differential equations, as the probability of going into a bin of a given
load differs by $o(1)$ between the two processes.
\begin{corollary}
\label{cor:mrh1}
Let $i$, $d$, and $T$ be constants.  Suppose $m=Tn$ balls are
sequentially thrown into $n$ bins according to double hashing.  Let $X_i(T)$ be the number of
bins of load at least $i$ after the balls are thrown.  Let $x_i(t)$
be determined by the family of differential equations 
$$\frac{dx_i}{dt} = x_{i-1}^d - x_{i}^d,$$
where $x_0(t) = 1$ for all time and $x_i(0) = 0$ for $i \geq 1$.  
Then with probability $1-o(1)$, 
$$\frac{X_i(T)}{n} = x_i(T) + o(1).$$
\end{corollary}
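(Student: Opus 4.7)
The plan is to apply the differential equation method to double hashing by matching drifts with modified random hashing step by step, then invoking the ODE-convergence of MRH established by the first lemma of Section \ref{sec:modrh}. The key input is that the one-step placement distributions of DH and MRH differ by $o(1)$ in total variation, which follows immediately from Lemma \ref{lem:coupling}.

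More specifically, I would first promote the pointwise bound $\eta(z) \leq (1+n^{-0.01}) p_j$ of Lemma \ref{lem:coupling} to a total-variation bound. Because $\sum_z \eta(z) = \sum_z p_j = 1$, the pointwise inequality yields $\sum_z \max(0, \eta(z) - p_j) \leq n^{-0.01}$, and hence the total variation distance between the DH and MRH one-step placement distributions is at most $n^{-0.01}$, with failure probability $ne^{-0.05n}$ over the current bin state. In particular, the probability that a ball lands in a bin of load exactly $i-1$ differs between DH and MRH by at most $n^{-0.01} = o(1)$; equivalently, the one-step drift of $X_i(t)$ under DH equals the MRH drift up to $o(1)$, which in turn equals $(X_{i-1}(t)/n)^d - (X_i(t)/n)^d + o(1)$ by the drift computation underlying the MRH lemma.

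With the drift identified, the remainder of the proof is the standard differential equation method, proceeding by induction on $i$. Assuming inductively that $X_{i-1}(t)/n$ concentrates on $x_{i-1}(t)$ uniformly on $[0,T]$ with probability $1-o(1)$, the drift of $X_i(t)$ becomes a Lipschitz function of $X_i(t)/n$ up to $o(1)$; increments are bounded by one; and so, after union-bounding the coupling failure over the $Tn$ steps (which costs only $Tn \cdot ne^{-0.05n} = o(1)$), $X_i(T)/n$ concentrates on the unique solution $x_i(T)$ of $dx_i/dt = x_{i-1}^d - x_i^d$ via the usual Azuma-Hoeffding bound on the associated Doob martingale, exactly as in the random hashing and MRH proofs.

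The main obstacle is the total-variation upgrade of Lemma \ref{lem:coupling} together with the bookkeeping needed to iterate it conditionally across the full trajectory: Lemma \ref{lem:coupling} is a one-step statement conditional on the current state, and we need to ensure the bound holds simultaneously at every one of the $Tn$ steps along the random trajectory of bin states actually realized by the DH process. The doubly-exponential tail $ne^{-0.05n}$ makes the required union bound trivial, after which the remaining differential-equation machinery is routine and identical to the MRH case.
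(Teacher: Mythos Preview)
Your proposal has a genuine gap in how Lemma~\ref{lem:coupling} is invoked. That lemma is not a pointwise statement about an arbitrary bin state: the failure probability $ne^{-0.05n}$ is taken over the random state produced by \emph{modified random hashing}, and the proof uses essentially that under MRH the ordering of the bins is a uniformly random permutation of $[n]$ (this is exactly how $\E[\hat\eta(z)]$ is identified with the random-hashing value). Under pure double hashing the bins are exchangeable only under the affine group $x\mapsto ax+b$, not under the full symmetric group, so for $d\ge 3$ there is no reason the ordering should be uniform and hence no reason $\E[\hat\eta(z)\mid z\text{ is }j\text{th}]$ should equal $d(n-1)\binom{j-1}{d-1}/\binom{n-1}{d-1}$ along a DH trajectory. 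Your union bound over the $Tn$ DH steps therefore applies the lemma outside its hypotheses; the strong tail is irrelevant because it bounds the wrong probability. This is precisely the circularity the coupling is designed to break.

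The paper never analyses the pure DH trajectory directly. The algorithm of the Theorem \emph{is} MRH at every step (so Lemma~\ref{lem:coupling} legitimately applies throughout), and its final state $B$ after $(1+2\delta)m$ balls dominates the pure-DH state $A$ after $m$ balls. Since $B\succeq A$ and $B$ carries only $2\delta m=o(n)$ more balls in total, one has $0\le X_i(B)-X_i(A)\le \sum_k\bigl(X_k(B)-X_k(A)\bigr)=2\delta m$, whence $X_i(A)/n = X_i(B)/n + o(1) = x_i\bigl((1+2\delta)T\bigr)+o(1)=x_i(T)+o(1)$ by Lemma~1 and the Lipschitz continuity of $x_i$. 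Your drift-matching intuition is the right one, but it must be run along the coupled (MRH) trajectory, with the Theorem and this ball-count argument transferring the conclusion back to double hashing.
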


\vspace{-0.2 in}

\section{Conclusion}
We have shown that the coupling argument of Lueker and Molodowitch
can, with some modification of the standard random hashing process,
yield results for double hashing with the balanced allocations
framework.  It is worth considering if this approach could be
generalized further to handle other processes, most notably cuckoo
hashing and peeling processes, where double hashing similarly seems to
have the same performance as random hashing \cite{MT}.  The challenge
here for cuckoo hashing appears to be that the state change on entry
of a new key is not limited to a single location; while only one cell
in the hash table obtains a key, other cells become potential future
recipients of the key if it should move, effectively changing the
state of those cell.  This appears to break the coupling method of the
LM argument, which conveniently can forget the choices involved after
an item is placed.  The issue similarly arises for peeling processes,
Robin Hood hashing, and other hashing schemes involving multiple
choice.  However, we optimistically suggest there may be some way to
further modify and extend this type of argument to remove this problem.

\vspace{-0.2 in}

\bibliographystyle{plain}

\section*{Appendix}

We briefly sketch the proofs of the following results regarding modified random hashing
that we discussed in section~\ref{sec:modrh}. 
\setcounter{lemma}{0}

\begin{lemma}
Let $i$, $d$, and $T$ be constants.  Suppose $m=Tn$ balls are
sequentially thrown into $n$ bins according to the modified random hashing process.  Let $X_i(T)$ be the number of
bins of load at least $i$ after the balls are thrown.  Let $x_i(t)$
be determined by the family of differential equations 
$$\frac{dx_i}{dt} = x_{i-1}^d - x_{i}^d,$$
where $x_0(t) = 1$ for all time and $x_i(0) = 0$ for $i \geq 1$.  
Then with probability $1-o(1)$, 
$$\frac{X_i(T)}{n} = x_i(T) + o(1).$$
\end{lemma}
\begin{proof}
(Sketch.)  We note that this result holds for either modified random hashing or random hashing.
For random hashing, the result is a well known application of the fluid limit approach.
Specifically, suppose we let $X_i(t)$ be a random variable denoting the number of 
bins with load at least $i$ after $tn$ balls have been thrown, 
and let $x_i(t) = X_i(t)/n$.  For $X_i$ to increase when
a ball is thrown, all of its choices must have load at least $i-1$, but not all of them
can have load at least $i$.  Let us first consider the case of random hashing.  For $i \geq 1$,
$$\E[X_i(t + 1/n) - X_i(t)] = (x_{i-1}(t))^d - (x_{i}(t))^d.$$
Let $\Delta(x_i) = x_i(t + 1/m) - x_i(t)$ and $\Delta(t) = 1/n$.  Then the above can be written as:
$$\E \left [ \frac{\Delta(x_i)}{\Delta(t)} \right ] = (x_{i-1}(t))^d - (x_{i}(t))^d.$$
In the limit as $m$ grows, we can view the limiting version of the above equation as
$$\frac{dx_i}{dt} = x_{i-1}^d - x_{i}^d.$$
The works of Kurtz and Wormald \cite{EK,Kurtz,Wormald} justify 
convergence of the random hashing process to the solution of the differential equations.
Specifically, it follows from Wormald's theorem \cite[Theorem 1]{Wormald} that
$$X_i(t) = nx_i(t) + o(n)$$
with probability $1-o(1)$, which matches the desired result.

Now we note that for the modified hashing process, with the corresponding variables, we have  
\begin{eqnarray*}
\E[X_i(t + 1/n) - X_i(t)] & = & (1-m^{-0.4})(x_{i-1}(t))^d - (x_{i}(t))^d + x_{i-1}(t)n^{-0.4} \\
& = & (x_{i-1}(t))^d - (x_{i}(t))^d + o(1).
\end{eqnarray*}
Wormald's theorem \cite[Theorem 1]{Wormald} allows $o(1)$ additive terms and yields the same result,
namely $X_i(t) = nx_i(t) + o(n)$ with probability $1-o(1)$.   {\hspace*{\fill}\rule{6pt}{6pt}\bigskip}
\end{proof}

\begin{lemma}
Let $d$ and $T$ be constants.  Suppose $m=Tn$ balls are
sequentially thrown into $m$ bins according to the modified random hashing process.  
Then the maximum load is $\frac{\log \log n}{\log d} + O(1)$, where the $O(1)$ term
depends on $T$.  
\end{lemma}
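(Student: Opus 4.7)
The plan is to adapt the layered induction argument of Azar, Broder, Karlin, and Upfal \cite{ABKU}, which establishes the $\frac{\log \log n}{\log d} + O(1)$ maximum load bound for pure random hashing, to the modified random hashing process. I would track, for each integer $i \geq 1$, a high-probability upper bound $\beta_i n$ on the number of bins of load at least $i$ present at the end of the process, and derive a recurrence that drives $\beta_i$ below $1/n$ within $\frac{\log \log n}{\log d} + O(1)$ layers.

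The key computation is the probability that a newly placed ball lands in a bin of load at least $i$. Letting $h_i$ denote the current number of such bins, under modified random hashing this probability equals
\[
(1-n^{-0.4})(h_i/n)^d + n^{-0.4}(h_i/n).
\]
As long as $h_i/n \geq n^{-0.4/(d-1)}$, the $d$-choice term dominates the random term up to a factor of two, so the recurrence reduces (up to constants) to $\beta_{i+1} \leq C \beta_i^d$ plus a Chernoff slack term, exactly as in the ABKU analysis. By the usual doubly exponential decrease, this yields $\beta_{i^\star} = O(n^{-0.4/(d-1)})$ for some $i^\star = \frac{\log \log n}{\log d} + O(1)$, i.e., $h_{i^\star} \leq n^{1 - 0.4/(d-1)}$ with high probability.

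For layers $i > i^\star$ the random-placement term may exceed the $d$-choice term, but the absolute counts are already small. Substituting the bound on $h_{i^\star}$ into the probability above and summing over the $m = Tn$ balls bounds the expected number of balls that ever land in a bin of load at least $i^\star$ by $O(n^{1 - 0.4 d/(d-1)})$, which is $n^{1 - \Omega(1)}$ for any fixed $d \geq 2$. Iterating this estimate for a constant number of additional layers drives the expected number of bins of load $\geq i^\star + O(1)$ below $1$, and Markov's inequality together with a union bound concludes that no bin has load exceeding $i^\star + O(1) = \frac{\log \log n}{\log d} + O(1)$ with high probability.

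The main technical obstacle is converting each expectation-level recurrence into a high-probability statement. I expect this to be handled exactly as in \cite{ABKU}, by conditioning on the previous layer's upper bound and applying Chernoff-type tail inequalities to the (dependent) indicators that count the balls landing in bins of load at least $i$; the extra independent Bernoulli$(n^{-0.4})$ choice between random and $d$-choice placement does not break this structure, since one can condition on the sequence of Bernoulli outcomes and handle the $d$-choice and random contributions separately. As an alternative route implicit in the author's intuition, one may argue directly that the $Tn^{0.6}$ random balls affect only $O(n^{0.6})$ bins by at most an additive constant with high probability, and then absorb that constant into the $O(1)$ term of the final bound.
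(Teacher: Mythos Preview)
Your proposal is correct and follows essentially the same approach as the paper: both adapt the ABKU layered induction, identify the same cutoff $i^\star$ where $(\beta_{i^\star}/n)^{d-1}$ drops to $n^{-0.4}$ so that the random-placement term begins to dominate, and then finish with a short direct argument covering the last $O(1)$ layers. The only cosmetic difference is the endgame: the paper bounds the probability that any bin of load $\geq i^\star$ receives three more balls via a union bound using the per-step probability $O(n^{-1.4})$, whereas you iterate the expectation recurrence a constant number of times and apply Markov's inequality---both routes yield the same $\frac{\log\log n}{\log d}+O(1)$ conclusion.
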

\begin{proof}
The proof is a simple modification of the layered induction proof of \cite[Theorems 3.2 and 3.7]{ABKU}.
For convenience, we consider just the case of $m = n$ to present the main idea, which corresponds to
\cite[Theorem 3.2]{ABKU}.  The theorem inductively shows that for balanced allocations
with random hashing the number of bins with load at least $i$
is bounded above with high probability by 
$$\beta_i = \frac{ne^{(d^{i-6}-1)/(d-1)}}{(2e)^{d^{i-6}}}$$
 for $i \geq 6$ and $i < i^*$ for some $i^* \leq \ln \ln n/\ln d + O(1)$, where for $i^*$
we have $\beta_{i^*}^d/n^d \leq 2 \ln n$.   

The same results hold with essentially the same induction when using the modified random hashing;
however, one must stop the induction earlier.  In particular, the probability that a ball lands in
a bin with load at least $i$ is now given by $(1-n^{-0.4})\beta_{i-1}^d / n^d + n^{-1.4} \beta_{i-1}$;
once $\beta_{i-1}^{d-1} / n^{d-1} \leq n^{-0.4}$, we can no longer use the induction.  
Let $i^* \leq \ln \ln n/\ln d + O(1)$ be the point where the induction step no longer applies using
modified random hashing.  At that point
the probability any specific bin with load at least $i^*$ obtains a ball at any time step is at most 
$2n^{-1.4}$.  The probability any bin with load $i^*$ obtains three more balls is thus bounded above
by ${n \choose 3} n (2n^{-1.4})^3 = O(n^{-0.2})$, so the maximum load is $i^* + 3$ with high probability.  
{\hspace*{\fill}\rule{6pt}{6pt}\bigskip}
\end{proof}

\end{document}